\numberwithin{equation}{section}
\newtheorem{theorem}{Theorem}[section]
\theoremstyle{definition}
\theoremstyle{remark}
\newtheorem*{remark}{Remarks}
\theoremstyle{openproblem}
\newtheorem*{openproblem}{Open Problem}
\begin{document}

\bibliographystyle{amsplain}

\title{Two Compact Incremental Prime Sieves}

\author[J.~P.~Sorenson]{Jonathan P.~Sorenson}
\address{Computer Science and Software Engineering,
Butler University, Indianapolis, IN 46208 USA}
\email{sorenson@butler.edu}
\urladdr{http://www.butler.edu/~sorenson} 
\thanks{Supported by a grant from the Holcomb Awards Committee}

\subjclass[2000]{Primary 11Y16, 68Q25; Secondary 11Y11, 11A51}
\keywords{Prime number sieve, number thoeretic algorithms,
  algorithm analysis, primality testing}
\date{\today}

\begin{abstract}
A prime sieve is an algorithm that finds the primes up to a bound $n$.
We say that a prime sieve is
  \textit{incremental}, if it can quickly determine if $n+1$ is
  prime after having found all primes up to $n$.
We say a sieve is \textit{compact}
  if it uses roughly $\sqrt{n}$ space or less.
In this paper we present two new results:
\begin{itemize}
  \item We describe the \textit{rolling sieve}, a practical, incremental
    prime sieve that takes $O(n\log\log n)$ time and
    $O(\sqrt{n}\log n)$ bits of space, and
  \item We show how to modify the sieve of Atkin and Bernstein \cite{AB2004}
    to obtain a sieve that is simultaneously 
    sublinear, compact, and incremental.
\end{itemize}
The second result solves an open problem given by 
  Paul Pritchard in 1994 \cite{Pritchard94}.
\end{abstract}

\maketitle

\section{Introduction and Definitions}

A \textit{prime sieve} is an algorithm that finds all prime numbers
  up to a given bound $n$.
The fastest known algorithms, 
  including Pritchard's wheel sieve \cite{Pritchard81}
  and the Atkin-Bernstein sieve \cite{AB2004},
  can do this using at most $O(n/\log\log n)$ arithmetic operations.
The easy-to-code sieve of Eratosthenes requires $O(n\log\log n)$ time,
  and there are a number of sieves in the literature that require
  linear time \cite{Pritchard83,Pritchard87}.

Normally, running time is the main concern in algorithm design, but
  in this paper we are also interested in two other properties:
  \textit{incrementality} and \textit{compactness}.

We say that a sieve is \textit{compact} 
  if it uses at most $n^{1/2+o(1)}$ space.
Bays and Hudson \cite{BH77} showed how to segment the sieve of Eratosthenes
  so that only $O(\sqrt{n})$ bits of space are needed.
  (See also \cite{Brent73}.)
Pritchard \cite{Pritchard83} showed how to apply a fixed wheel to
  get a compact sieve that runs in linear time.
Atkin and Bernstein \cite{AB2004} gave the first compact sieve that
  runs in sublinear time.
For other sieves that address space issues, see for example
  \cite{Galway2000,Galway98,GalwayThesis,Sorenson98,DJS96,Sorenson06}.

Loosely speaking, a sieve is \textit{incremental} if it can
  determine the primality of $n+1$ after having found all primes up to $n$
  using a small amount of additional work.
Bengelloun \cite{Bengelloun86} presented the first prime sieve that needed only
  a constant amount of additional work to do this, thereby taking a total of
  $O(n)$ time to find all primes up to $n$.
Pritchard \cite{Pritchard94} showed how to improve Bengelloun's sieve using a
  dynamic wheel so that in constant time it determines the
  the primality of all integers from $n$ to $n+\Theta(\log\log n)$.
In other words, it takes
  $O(1+ (p-n) / \log \log n)$ time to find $p$, if $p$ is the smallest prime
  exceeding $n$.
Pritchard's sieve takes $O(n/\log\log n)$ time to find all primes up to $n$.

To clarify, let us define a sieve as \textit{$t(n)$-incremental} if,
  in the worst case, it requires
  $(p-n)\cdot t(n)+O(1)$ operations to find $p = p_{\pi(n)+1}$,
  the smallest prime exceeding $n$, 
  after having found all primes up to $n$.
Thus, Bengelloun's sieve is $O(1)$-incremental, and Pritchard's sieve
  is $O(1/\log\log n)$-incremental.

\begin{openproblem}
Design a prime sieve that is both compact and $o(1)$-incremental.
\cite{Pritchard94}
\end{openproblem}

\noindent
In this paper, we address this problem from two perspectives,
one practical, and one theoretical:
\begin{itemize}
\item
We present a sieve algorithm, with pseudocode,
  that finds the primes up to $n$
  using $O(n\log\log n)$ arithmetic operations and
  $O(\sqrt{n}\log n)$ bits of space, 
  that is  $O(\log n/\log\log n)$-incremental.
It is based on the segmented sieve of Eratosthenes, adapting some
  of the ideas in Bennion's sieve \cite{Galway98}.
Our sieve uses a circular array of stack datastructures
  and works well in practice.
\item
We also prove the following:
\begin{theorem}\label{mainthm}
There exists a prime sieve that is
  both compact and $O(1/\log\log n)$-incremental.
\end{theorem}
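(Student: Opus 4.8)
The plan is to start from the Atkin--Bernstein sieve \cite{AB2004}, which is already compact and runs in $O(n/\log\log n)$ time, to run it segment by segment, and to \emph{pipeline} the segments: each segment is computed one segment ahead of the point currently being reported, and its cost is paid in equal installments over the integers of the preceding segment, so that at a segment boundary essentially no work remains.

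Concretely, I would fix a segment length $\Delta=\sqrt n\,\log\log n$ (any $\Delta=\sqrt n\cdot\omega(1)$ with $\Delta=n^{1/2+o(1)}$ would do) and a wheel modulus $W$ equal to a primorial with $\log W=(\log n)^{\Theta(1)}$ and $W=O(\sqrt n)$. Then $\log\log W\sim\log\log n$, so by Mertens' theorem the density of residues modulo $W$ coprime to $W$ --- equivalently, the density of lattice points at which the relevant Atkin--Bernstein form $4x^2+y^2$, $3x^2+y^2$ or $3x^2-y^2$ takes a value coprime to $W$ --- is $\Theta(1/\log\log n)$, while a bit table for $W$ fits in $O(\sqrt n)$ bits; and I would keep the primes up to $\sqrt{2n}$ in $O(\sqrt n)$ bits, recomputed by a segmented sieve of Eratosthenes whenever $n$ doubles. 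I then claim the primes in a segment $[L,L+\Delta]$ can be found in $O(\sqrt n+\Delta/\log\log n)$ operations: run the three Atkin--Bernstein form enumerations restricted to the residue classes coprime to $W$ (the outer loop over one variable has $O(\sqrt n)$ iterations, the number of toggled array entries is $O(\Delta/\log\log n)$), then the squarefree pass, deleting for each prime $p\le\sqrt{L+\Delta}$ the multiples of $p^2$ in the segment that are coprime to $W$ (which, once $\Delta\gg\sqrt n$, is again $O(\Delta/\log\log n)$); the surviving entries are the primes of the segment. With $\Delta=\sqrt n\,\log\log n$ this is $O(\Delta/\log\log n)$ per segment, hence $O(n/\log\log n)$ over the $n/\Delta$ segments, and the working space --- two segment arrays, the wheel, the auxiliary primes --- is $n^{1/2+o(1)}$ bits. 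So far the sieve is exactly as compact and as fast as Atkin--Bernstein.

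The new ingredient is the pipeline, which makes it $O(1/\log\log n)$-incremental. At every moment the algorithm holds the segment containing the current position $n$ (completely computed) together with the next segment, computed to within an amount of work proportional to how far $n$ has advanced into the current segment; each unit advance of $n$ triggers $\Theta(1/\log\log n)$ further units of that computation, so the next segment is ready precisely when $n$ reaches its left endpoint, where an $O(1)$ pointer swap promotes it and work begins on the one after that. Reporting whether $n+1$ is prime is then an $O(1)$ array lookup, and advancing to the next prime $p$ costs $(p-n)\cdot O(1/\log\log n)$ for the pipelined computation, $O(1+(p-n)/\log n)$ machine-word operations to locate the next set bit in the word-packed segment arrays, and $O(1+(p-n)/\Delta)$ for segment promotions --- altogether $(p-n)\cdot O(1/\log\log n)+O(1)$, which is what is required. (The re-initialisation with larger $\Delta$ and $W$ and the recomputation of the auxiliary primes when $n$ doubles cost $O(\sqrt n)$ each, amortising to $o(1/\log\log n)$ per integer, and are fed into the same pipeline.)

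The main obstacle I expect is the bookkeeping that makes every action of the sieve --- the three quadratic-form walks, the squarefree pass, and in particular the recursive production of the primes up to $\sqrt{2n}$ and the periodic rescaling --- divisible into $O(1)$-size pieces that can be interleaved with the advance of $n$, so that no segment boundary or rescaling causes a worst-case spike; together with the (easy but essential) choice of $\Delta$ simultaneously large enough that the $\Theta(\sqrt n)$ per-segment overhead amortises to $O(1/\log\log n)$ per integer and small enough to keep the sieve compact, for which $\Delta=\sqrt n\,\log\log n$ is the sweet spot. The sole number-theoretic input --- that a wheel stored in $O(\sqrt n)$ bits already thins the candidates by a factor $\Theta(1/\log\log n)$ --- is the same estimate that governs the Atkin--Bernstein running time.
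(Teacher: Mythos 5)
Your proposal is correct and is essentially the paper's own argument: the paper likewise keeps two consecutive $\sqrt{n}$-sized segments, answers queries from the completed one in $O(1)$ time, and doles out the Atkin--Bernstein computation of the next segment in $O(1/\log\log n)$-sized installments per unit advance (investing $O(1+\ell/\log\log n)$ work on a \texttt{nextprime} jump of length $\ell$), swapping segments when the first is exhausted. The only difference is one of presentation: the paper treats the Atkin--Bernstein sieve as a black box with the three properties it needs (segmentability, $O(\Delta/\log\log n)$ time per segment, $O(\Delta)$ carried state), whereas you open the box and re-derive those properties from the wheel and quadratic-form enumeration.
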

\noindent
Our proof relies on modifying the sieve of Atkin and Bernstein \cite{AB2004}.
\end{itemize}

After we discuss some preliminaries in \S\ref{sec:prelim},
  we present our rolling sieve in \S\ref{sec:alg} and
  we prove our theorem in \S\ref{sec:thm}.

\section{Preliminaries\label{sec:prelim}}

In this section we discuss our model of computation,
  review some helpful estimates from elementary number theory,
  and review the sieve of Eratosthenes.

\subsection{Model of Computation}

Our model of computation is a RAM with a potentially infinite,
direct access memory.

If $n$ is the input, then all arithmetic operations on integers
  of $O(\log n)$ bits have unit cost.
This includes $+$, $-$, $\times$, and division with remainder.
Comparisons, array indexing, assignment, branching,
bit operations, and other basic operations are also assigned unit cost.
Memory may be addressed at the bit level or at the word level, where
  each word has $O(\log n)$ bits.

Space is measured in bits.
Thus, it is possible for an algorithm to touch $n$ bits in only
  $O(n/\log n)$ time if memory is accessed at the word level.
The space used by the output of a prime sieve,
  the list of primes up to $n$, is not counted against the algorithm.

This is the same model used in \cite{DJS96,Sorenson98,Sorenson06}.

\subsection{Some Number Theory}

We make use of the following estimates.
Here the sums over $p$ are over primes only, and $x>0$:
\begin{eqnarray}
\pi(x) := \sum_{p\le x} 1 &=& \frac{x}{\log x} (1+o(1));
  \label{eq:pi} \\
\sum_{p\le x} \frac{1}{p} &=& \log\log x + O(1);
  \label{eq:harmonic} \\
\sum_{p\le x} \log p &=& x(1+o(1));
  \label{eq:logp} 
\end{eqnarray}
For proofs, see Hardy and Wright \cite{HW}.

\subsection{Sieve of Eratosthenes\label{sec:erat}}

We present algorithms using a C++ style that should be
  familiar to most readers.
We assume that integer variables can accurately hold
  any integer represented in standard signed binary
  notation in $O(\log n)$ bits.
In practice, this means 64 bits.

Recall that the sieve of Eratosthenes uses a bit vector
  to represent the primes up to $n$.
All are assumed to be prime (aside from 0 and 1) and,
  for each prime $p$ found, its multiples $q$ are enumerated
  and "crossed off" as not prime, as follows:

\begin{verbatim}
  BitVector S(n+1); // bit vector for 0..n
  S.setAll();
  S[0]=0; S[1]=0;
  for(p=2; p*p<=n; p=p+1)
    if(S[p]==1)  // so that p must be prime
      for(q=2*p; q<=n; q=q+p)
        S[q]=0;
\end{verbatim}
This requires $\sum_{p\le \sqrt{n}} n/p=O(n\log\log n)$ arithmetic
  operations (using (\ref{eq:harmonic})) and $O(n)$ bits of space.

\subsection{Segmented Sieve of Eratosthenes\label{sec:seg}}

The main loop of a segmented sieve is simply a loop over each segment.
Here $\Delta$ is the size of that segment; we choose
  $\Delta$ to be proportional to $\sqrt{n}$.
We assume here the primes below $\sqrt{n}$ have 
already been found some other way (perhaps with the unsegmented sieve above).

\begin{verbatim}
  for(left=sqrt(n)+1; left<=n; left=left+delta)
  {
    right=min(left+delta-1,n);
    sieve(left,right);
  }
\end{verbatim}
Each segment is then sieved by crossing off multiples of each prime
  $p$ below $\sqrt{n}$.

\begin{verbatim}
  Primelist P(sqrt(right)); // list of primes <= sqrt(n)
  BitVector S(delta); // bit vector for left..right
  S.setAll();
  for(i=0; i<P.length; i++)
  {
    p=P[i];
    first=left+(p-(left%p))%p; // min. first>=left st. p|first
    for(q=first; q<=right; q=q+p)
      S[q-left]=0;
  }
\end{verbatim}
Using (\ref{eq:harmonic}), 
the running time for a single segment is proportional to
$$
  \sum_{p\le\sqrt{n}} \left( 1+\frac{\Delta}{p} \right)
   \quad=\quad O\left(\Delta\log\log n + \frac{\sqrt{n}}{\log n} \right)
$$
and summing this cost over $n/\Delta$ segments gives $O(n\log\log n)$.
The total space used is $O(\sqrt{n})$ bits;
  this is dominated by space for the the list of
  primes up to $\sqrt{n}$  (using (\ref{eq:logp})), 
  plus $\Delta$ bits for the segment bit vector.

\begin{remark}\ 
\begin{itemize}
\item
  Both sieves can readily be modified to completely factor all integers
    up to $n$ by replacing the bit vector with an array of lists,
    and adding $p$ to the list for each $q$ generated by the inner loop.
\item
  Pritchard\cite{Pritchard83} added a \textit{static wheel} to
    the segmented sieve to obtain an $O(n)$ running time.
  For a discussion of the static wheel, see \cite[\S2.4]{Sorenson98},
    which gives C++-style pseudocode and a running time analysis.
\item
  For a parallel version of the segmented sieve, see \cite{SP94}.
\end{itemize}
\end{remark}

\section{Algorithm Description\label{sec:alg}}

In this section we present our rolling sieve.
It is in many ways similar to the hopping sieve of Bennion \cite{Galway98}.

The primary data structure is a circular array of stacks.
Each array location corresponds to an integer $n$,
  and when $n$ is reached by the sieve,
  its stack will contain the prime divisors of $n$ up to $\sqrt{n}$.
If the stack is empty, then either $n$ is prime or the square of a prime.
In the latter case, we discover a new prime by which to sieve, $r$,
  so we push $r$ onto $n+r$'s stack.

When moving from $n$ to $n+1$, we pop each prime $p$ from $n$'s stack
  and push it onto the stack for $n+p$, the next stack position
  corresponding to an integer divisible by $p$.

If the position for $n+p$ is larger than $\Delta$, 
  the size of the circular array,
  we simply wrap around to the front.

We are now ready to present pseudocode.

\subsection{Precomputation}

Let \texttt{start} be the value of the first integer we wish to
  test for primality.
To set up, then, we find the primes up to $\sqrt{\texttt{start}}$
  and push them onto the correct stacks.
Note that array position $0$ corresponds to \texttt{start} as we begin.

\begin{verbatim}
  r=floor(sqrt(start))+1;
  s=r*r;
  PrimeList P(r-1);
  delta=r+2;
  StackArray T(delta);
  for(i=0; i<P.length; i=i+1)
  {
    p=P[i];
    j=(p-(start%p))%p;
    T[j].push(p);
  }
  pos=0; n=start;
\end{verbatim}
We have $\lfloor \sqrt{n} \rfloor = r-1 = \Delta-3$ so that
  $n< s=r^2$ and $r+1<\Delta$ hold.

\subsection{Primality of $n$}

Once the array of stacks is set up, we can check each successive
  integer $n$ for primality as shown in the function below,
  which returns true if $n$ is prime, and false otherwise.
It also sets its state to be ready to handle $n+1$ on the subsequent call.

\begin{verbatim}
bool next()
{
  isPrime=true;
  while(!T[pos].isEmpty()) // process prime divisors
  {
    p=T[pos].pop();
    T[(pos+p)%delta].push(p);
    isPrime=false;
  }
  if(n==s) // then n is a square
  {
    if(isPrime) // then r is in fact prime
    {
      T[(pos+r)%delta].push(r);
      isPrime=false;
    }
    r=r+1; s=r*r;
  }
  n=n+1;
  pos=(pos+1)%delta;
  if(pos==0) { delta=delta+2; }
  return isPrime;
}
\end{verbatim}
\begin{itemize}
\item The list of primes stored in stacks includes all primes
  up to $\sqrt{n}$, unless $n$ is, in fact, the square of a prime.
  If this is the case, we detect that $r$ is prime and
  put it into the appropriate stack.
  This is how we incrementally grow the list of primes up to $\sqrt{n}$
    for sieving.

\item We can grow $\Delta$ and the array of stacks over time by
  simply adding two to $\Delta$ when \texttt{pos} reaches zero.
  We make sure no prime stored in the stacks exceeds $\Delta$,
    or in other words, $r<\Delta$ is invariant.
  The new stacks added to the end of the array are initialized to empty.

  Note that if we start $\Delta=\lfloor\sqrt{n}\rfloor+3$,
    and then after $\Delta$ function calls increment by 2,
    we end up iterating the mapping 
    $(n,\Delta)\rightarrow(n+\Delta,\Delta+2)$.
  Iterating $k$ times gives 
    $(n,\Delta)\Rightarrow(n+k\Delta+k(k-1),\Delta+2k)$.
  Squaring $\Delta+2k$ shows that the segment stays larger than $\sqrt{n}$
    over time, and the $k(k-1)$ term insures it won't exceed $O(\sqrt{n})$.

  Extending an array of stacks in the RAM model discussed 
    in \S\ref{sec:prelim} is straightforward,
    but in practice it is not so simple.
  One option is to use a fixed value for $\Delta$,
    only use primes up to $\Delta$ in the stacks,
    and then numbers that seem prime should be prime tested
    as an extra step.
  See, for example, the ideas in \cite{Sorenson06}.

\item To find all primes up to a bound $n$,
  simply print the primes up to $100$, say, 
  then set up using $\texttt{start}=100$,
  and repeatedly call the \texttt{next()} function, printing
  when it returns true.

  \begin{verbatim}
    int nextprime()
      { while(!next()); return n-1; }
  \end{verbatim}

\item
  The sieve can readily be modified to generate integers in factored form.
  Simply make a copy of $n$, and as the primes are popped from the stack,
    divide them into $n$.
  When the stack is empty, what remains is either 1 or prime.
  We leave the details to the reader.

  An early version of this algorithm was used for this purpose
    in generating data published in \cite{BS13}.

\item
  The array of stacks datastructure used here can easily be viewed
  as a special-case application of a \textit{monotone priority queue}.
  In this view, each prime up to $\sqrt{n}$ is stored in the
    queue with its priority set to the next multiple of that prime.
  See, for example, \cite{CGS99}.
\end{itemize}

\subsection{Analysis}

\begin{theorem}
  The rolling sieve will find all primes up to a bound $n$
  using $O(n\log\log n)$ arithmetic operations and
  $O(\sqrt{n}\log n)$ bits of space.
\end{theorem}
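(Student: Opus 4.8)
The plan is to account separately for the time and the space, in each case splitting the work into the precomputation phase and the per-integer work done by successive calls to \texttt{next()}. The running-time argument mirrors the analysis of the segmented sieve of Eratosthenes in \S\ref{sec:seg}: the total cost is dominated by the push/pop operations on the stacks, and each prime $p\le\sqrt{n}$ is popped and re-pushed exactly once for every multiple of $p$ up to $n$, contributing $\sum_{p\le\sqrt{n}} n/p$ operations. By estimate (\ref{eq:harmonic}) this is $O(n\log\log n)$. To this I would add the cost of detecting new sieving primes: each time $n$ reaches $s=r^2$ we do $O(1)$ work and advance $r$, for a total of $O(\sqrt{n})$ such events; and the cost of extending the array when \texttt{pos} wraps to $0$, which happens $O(\sqrt{n})$ times over the life of the algorithm and costs $O(\Delta)=O(\sqrt{n})$ each time in the RAM model, for $O(n)$ total. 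The precomputation invokes an ordinary sieve up to $\sqrt{\texttt{start}}\le\sqrt{n}$, which is $O(\sqrt{n}\log\log n)$, plus $O(\pi(\sqrt n))$ pushes. Summing, the arithmetic-operation count is $O(n\log\log n)$.

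For the space bound I would argue that at any moment the data structure holds each prime $p\le\sqrt{n}$ in exactly one stack (the one indexed by its current next-multiple position), so the total number of stored values is $\pi(\sqrt n)=O(\sqrt n/\log n)$, each of $O(\log n)$ bits, giving $O(\sqrt n)$ bits for the stack contents. The circular array itself has $\Delta$ entries; since $\Delta$ starts at $\lfloor\sqrt{\texttt{start}}\rfloor+3$ and grows by $2$ only on each wraparound, the iteration $(n,\Delta)\mapsto(n+\Delta,\Delta+2)$ keeps $\Delta=\Theta(\sqrt n)$ throughout, as noted in the algorithm description — squaring $\Delta+2k$ shows the segment stays above $\sqrt n$, and the quadratic-in-$k$ offset keeps it $O(\sqrt n)$. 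So the array overhead is $O(\sqrt n\log n)$ bits (allowing $O(\log n)$ bits of bookkeeping per stack, e.g. a pointer), which dominates. The list $P$ of precomputed primes up to $\sqrt{\texttt{start}}$ is $O(\sqrt n)$ bits by (\ref{eq:logp}). Hence total space is $O(\sqrt n\log n)$ bits.

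The one point requiring genuine care — the main obstacle — is verifying the invariant that every prime ever pushed is $<\Delta$, so that the index arithmetic $(\texttt{pos}+p)\bmod\Delta$ never collides with a "future" copy of the same prime and so that a newly discovered prime $r$ always fits in the current array. This is where the relationship between the growth rate of $\Delta$ and the growth rate of $r=\lfloor\sqrt n\rfloor+1$ must be pinned down: I would show that after $k$ wraparounds we have $n\le\texttt{start}+k\Delta_0+k^2$ while $\Delta=\Delta_0+2k$, and then check that $\lfloor\sqrt n\rfloor+1<\Delta$ follows from $\Delta_0>\sqrt{\texttt{start}}$ by comparing $(\Delta_0+2k)^2$ against $\texttt{start}+k\Delta_0+k^2$. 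Once this invariant is established, correctness of the sieve (each composite $n$ has its least prime factors popped before \texttt{isPrime} is returned, and each prime or prime-square is correctly identified) is immediate from the construction, and the bounds above follow. The remaining estimates are routine applications of (\ref{eq:pi}), (\ref{eq:harmonic}), and (\ref{eq:logp}).
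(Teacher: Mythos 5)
Your proposal is correct and follows essentially the same argument as the paper: the running time is charged to the push/pop operations, each prime $p\le\sqrt{n}$ being moved once per multiple up to $n$ for a total of $\sum_{p\le\sqrt{n}} n/p = O(n\log\log n)$ by (\ref{eq:harmonic}), and the space is $\Delta=O(\sqrt{n})$ array slots plus $\pi(\sqrt{n})$ stack nodes at $O(\log n)$ bits each. The extra bookkeeping you include (precomputation cost, array-extension cost, and the $r<\Delta$ invariant) is handled in the paper's algorithm-description remarks rather than in the proof itself, but it does not change the approach.
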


\begin{proof}
The running time is bounded by the number of times
  each prime $p$ up to $\sqrt{n}$ is popped and pushed.
But that happens exactly once for each integer up to $n$
  that is divisible by $p$, or
$$
  \sum_{p\le\sqrt{n}} \frac{n}{p} = O( n\log\log n).
$$
Assuming a linked list style stack data structure,
  the total space used is one machine word for the number of stacks, 
  $\Delta=O(\sqrt{n})$,
  plus the number of nodes, $\pi(\sqrt{n})$.
At $O(\log n)$ bits per word, we have our result.
\end{proof}

\begin{theorem}
  The rolling sieve is $O(\log n/\log\log n)$-incremental.
\end{theorem}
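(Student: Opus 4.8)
The plan is to charge the cost of the increment step to the number of distinct prime divisors of the integers examined. First I would analyze a single invocation of \texttt{next()}. When \texttt{next()} is called for the integer $m$, the stack \texttt{T[pos]} contains a subset of the prime divisors of $m$ — a prime $q$ is pushed into the array only once the sieve reaches $q^2$, and from then on it always resides in the stack indexed by its next multiple of $q$, so it lies in $m$'s stack only when $q\mid m$. Hence the \texttt{while} loop performs at most $\omega(m)$ pop/push pairs, where $\omega(m)$ is the number of distinct primes dividing $m$, and the remaining work in \texttt{next()} — the test \texttt{n==s}, the possible updates of $r$, $s$, and $\Delta$, and the advance of \texttt{pos} — is clearly $O(1)$. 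So one call costs $O(1+\omega(m))$ arithmetic operations.

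Next I would bound $\omega(m)$ uniformly over the relevant range. If $m$ has $k$ distinct prime factors then $m\ge\prod_{i=1}^{k}p_i$, the $k$-th primorial, so by (\ref{eq:logp})
$$
\log m\;\ge\;\sum_{i=1}^{k}\log p_i\;=\;p_k\,(1+o(1)),
$$
and by (\ref{eq:pi}) we have $p_k=(1+o(1))\,k\log k$. Combining these forces $k=O(\log m/\log\log m)$. Since every integer $m$ inspected while searching for the next prime satisfies $m\le p=n(1+o(1))$, we get $\log p\sim\log n$ and $\log\log p\sim\log\log n$, and therefore $\omega(m)=O(\log n/\log\log n)$ throughout the search.

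Finally, to locate $p=p_{\pi(n)+1}$ the sieve simply calls \texttt{next()} once for each of $n+1,n+2,\dots,p$, a total of $p-n$ calls. Summing the per-call bound,
$$
\sum_{m=n+1}^{p} O\!\left(1+\omega(m)\right)\;=\;O\!\left((p-n)\cdot\frac{\log n}{\log\log n}\right),
$$
which is exactly of the form $(p-n)\cdot t(n)+O(1)$ with $t(n)=O(\log n/\log\log n)$. This establishes the claimed incrementality.

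I do not expect a deep obstacle. The one place that needs a careful word is the assertion that the non-loop part of \texttt{next()} is truly $O(1)$: one should note that at most one perfect square, hence at most one update of the pair $(r,s)$, can occur during a single call, and that extending the circular array of stacks when \texttt{pos} wraps to $0$ costs $O(1)$ in the RAM model of \S\ref{sec:prelim}. The substantive ingredient is the classical estimate $\max_{m\le n}\omega(m)=O(\log n/\log\log n)$, which is immediate from the number-theoretic estimates already recorded in \S\ref{sec:prelim}.
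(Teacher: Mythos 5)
Your argument is correct, but it takes a genuinely different route from the paper's. Both proofs reduce the increment cost to counting prime divisors: a call to \texttt{next()} on the integer $m$ does $O(1)$ work plus one pop/push pair per prime on $m$'s stack, and that stack holds only primes dividing $m$. You then finish with the pointwise maximal-order bound $\omega(m)=O(\log m/\log\log m)$, obtained from the primorial inequality together with (\ref{eq:pi}) and (\ref{eq:logp}), applied to each of the $p-n$ integers separately. The paper instead bounds the aggregate $\sum_{m=n}^{p}\omega(m)$ over the whole gap $\ell=p-n$: primes $q\le\ell$ contribute $O(\ell\log\log\ell)$ via (\ref{eq:harmonic}), while each prime $q>\ell$ divides at most one integer in the interval, and an extremal argument (their count is largest when they are as small as possible) caps their number at $O(\ell\log n/(\log\ell+\log\log n))$. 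Your version is shorter and entirely elementary, and it suffices for the stated theorem, since the worst case is $\ell=O(1)$, where both analyses give $O(\log n/\log\log n)$ per step. What the paper's amortized analysis buys is a sharper conclusion for longer gaps: the cost per integer is $O(\log\log\ell+\log n/(\log\ell+\log\log n))$, which falls to $O(\log\log n)$ when $\ell$ is a power of $n$; a pointwise bound on $\omega$ cannot detect this. Two minor points on your write-up: the appeal to $p=n(1+o(1))$ is more than you need (it does follow from (\ref{eq:pi}), but $p<2n$ already gives $\log m\sim\log n$ for all integers examined), and your parenthetical checks that the non-loop work is $O(1)$ --- at most one update of $(r,s)$ per call, and constant-time extension of the stack array in the stated RAM model --- are exactly the right things to verify.
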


\begin{proof}
It should be clear that we need to count the total number of
  prime divisors of the integers from $n$ to $p$, where $p$ is
  the smallest prime larger than $n$.

Let $\ell=p-n$.
First we consider primes $q\le\ell$.
Each such prime $q$ divides $O(\ell/q)$ integers in this interval,
  for a total of $O(\ell\log\log \ell)$.

Next we consider primes $q>\ell$.
Each such prime can divide at most one integer between $n$ and $p$.
Summing up the logarithms of such primes, we have
$$
  \sum_{m=n}^p \sum_{q\mid m, q>\ell} \log q = O(\ell \log n).
$$
The number of such primes $q$ is maximized when the primes are as small
  as possible.
Again, each prime can only appear once in the sum above,
  so we solve the following for $x$:
$$
  \sum_{\ell<q \le x} \log q = O(\ell\log n)
$$
We have $x=O(\ell\log n)$ by (\ref{eq:logp}), 
  and so the number of primes is bounded by
  $\pi(x)=O(\ell\log n/(\log\ell+\log\log n))$.

Dividing through by $\ell$ completes the proof.
\end{proof}

\begin{remark}\ 
\begin{itemize}
\item
  For the purposes of analysis, and to limit the space used,
  our stacks are linked-list based.
  For speed, array-based stacks are better. 
  For 64-bit integers, an array of length 16 is sufficient.
\item
  Although we can only prove the rolling sieve is
  $O(\log n/\log\log n)$-incremental, 
  by the Erd\"os-Kac theorem, in practice it will behave
  as if it is $O(\log\log n)$-incremental.
  See, for example, \cite{EK40}.
\item
  One could add a wheel to reduce the overall running time by
  a factor proportional to $\log\log n$,
  and similarly improve incremental behavior.
  We have not tried to code this, 
    and growing the wheel by adding even
    a single prime, in an incremental way, seems messy at best.
  From a theoretical perspective, the question is moot as we will see
    in the next section.
\end{itemize}
\end{remark}

\section{The Theoretical Solution\label{sec:thm}}

In this section we prove Theorem \ref{mainthm}.

Our proof makes use of the sieve of Atkin and Bernstein \cite{AB2004}.
In particular, we make use of the following properties of this
  algorithm:
\begin{itemize}
  \item The sieve works on segments of size $\Delta$, where
    $\Delta\approx\sqrt{n}$, and has a main loop similar to
    the segmented sieve of Eratosthenes discussed in \S\ref{sec:seg}.
  \item The time to determine the primality of all integers
    in the interval from $n$ to $n+\Delta$ is
    $O(\Delta/\log\log n)$ arithmetic operations.
  \item Each interval can be handled separately, with at most
    $O(\Delta)$ extra space of data passed from one interval to the next.
\end{itemize}
Any sieve that had these properties can be used as a basis for our proof.

\begin{proof}
We maintain two consecutive intervals of information.

The first interval is a simple bit vector and list of primes
  in the interval $n$ to $n+\Delta$.
Calls to the \texttt{next()} function are answered using data
  from this interval in constant time.
We can also handle a \texttt{nextprime()} function that jumps
  ahead to the next prime in this interval in constant time
  using the list.

The second interval, for $n+\Delta$ to $n+2\Delta$, is in process; 
  it is being worked on by the Atkin-Bernstein sieve.
We maintain information that allows us to start and stop the 
  sieve computation after any fixed number of instructions.
When this interval is finished, it creates a bit vector and list of primes
  ready for use as the new first interval.

So, when a call to the \texttt{next()} function occurs,
  after computing its answer from the first interval,
  a constant amount of additional work time is invested sieving the second
  interval; enough that after $\Delta$ calls to \texttt{next()},
  the second interval will be completed.

When a call to the \texttt{nextprime()} function occurs,
  we compute the distance to that next prime, $\ell$,
  and invest $O(1+\ell/\log\log n)$ time sieving the second interval
  after computing the function value.

In this way, by the time the first interval has been used up,
  the second interval has been completely processed, and we can
  replace the first interval with the second, and start up the next one.
\end{proof}

\begin{remark}\ 
\begin{itemize}
\item
The proof is a bit artificial and of theoretical interest only.
Coding this in practice seems daunting.
\item
Any segmented sieve, more or less, can be used to make this proof work.
In particular, if a compact sieve that is faster, or uses less space,
can be found, such a sieve immediately implies a version that is
also incremental.
\item
Pritchard has also pointed out that \textit{additive} sieves,
  ones that avoid the use of multiplication and division operations,
  are desirable due to the fact that hardware addition is much faster
  than multiplication or division.
From a theoretical perspective, this is moot since a multiplication table
  together with finite-precision multiplication and division routines
  can reduce all arithmetic operations on $O(\log n)$-bit integers
  to additions.

That said, see \cite{Pritchard94,Pritchard81}.
\end{itemize}
\end{remark}


\bibliography{all}

\nocite{Pritchard81,Pritchard83,Pritchard94,Bengelloun86,
  HW,BS,CP,Knuthv2,Galway2000,GalwayThesis,
  Sorenson06,SP94,Sorenson98,DJS96,AB2004}
  
\providecommand{\bysame}{\leavevmode\hbox to3em{\hrulefill}\thinspace}
\providecommand{\MR}{\relax\ifhmode\unskip\space\fi MR }
\providecommand{\MRhref}[2]{%
  \href{http://www.ams.org/mathscinet-getitem?mr=#1}{#2}
}
\providecommand{\href}[2]{#2}

\end{document}